\newtheorem{theorem}{Theorem}
\newtheorem{remark}{Remark} 
\newtheorem{assumption}{Assumption}
\newtheorem{lemma}{Lemma}
\begin{document}
%
% paper title
% Titles are generally capitalized except for words such as a, an, and, as,
% at, but, by, for, in, nor, of, on, or, the, to and up, which are usually
% not capitalized unless they are the first or last word of the title.
% Linebreaks \\ can be used within to get better formatting as desired.
% Do not put math or special symbols in the title.
\title{Adaptive-Robust Control of a Class of Uncertain Nonlinear Systems Utilizing Time-Delayed Input and Position Feedback }
%
%
% author names and IEEE memberships
% note positions of commas and nonbreaking spaces ( ~ ) LaTeX will not break
% a structure at a ~ so this keeps an author's name from being broken across
% two lines.
% use \thanks{} to gain access to the first footnote area
% a separate \thanks must be used for each paragraph as LaTeX2e's \thanks
% was not built to handle multiple paragraphs
%

\author{Spandan~Roy
        and~Indra~Narayan~Kar,~\IEEEmembership{Senior~Member,~IEEE}% <-this % stops a space
\thanks{%S. Roy and I. N. Kar is with the Department
%of Electrical Engineering, Indian Institute of Technology-Delhi, New Delhi,
%India e-mail: (sroy002@gmail.com, ink@ee.iitd.ac.in).
}% <-this % stops a space
 %<-this % stops a space
\thanks{}}

% note the % following the last \IEEEmembership and also \thanks - 
% these prevent an unwanted space from occurring between the last author name
% and the end of the author line. i.e., if you had this:
% 
% \author{....lastname \thanks{...} \thanks{...} }
%                     ^------------^------------^----Do not want these spaces!
%
% a space would be appended to the last name and could cause every name on that
% line to be shifted left slightly. This is one of those "LaTeX things". For
% instance, "\textbf{A} \textbf{B}" will typeset as "A B" not "AB". To get
% "AB" then you have to do: "\textbf{A}\textbf{B}"
% \thanks is no different in this regard, so shield the last } of each \thanks
% that ends a line with a % and do not let a space in before the next \thanks.
% Spaces after \IEEEmembership other than the last one are OK (and needed) as
% you are supposed to have spaces between the names. For what it is worth,
% this is a minor point as most people would not even notice if the said evil
% space somehow managed to creep in.

% The paper headers
\markboth{}%
{Shell \MakeLowercase{\textit{et al.}}: Bare Demo of IEEEtran.cls for Journals}
% The only time the second header will appear is for the odd numbered pages
% after the title page when using the twoside option.
% 
% *** Note that you probably will NOT want to include the author's ***
% *** name in the headers of peer review papers.                   ***
% You can use \ifCLASSOPTIONpeerreview for conditional compilation here if
% you desire.

% If you want to put a publisher's ID mark on the page you can do it like
% this:
%\IEEEpubid{0000--0000/00\$00.00~\copyright~2014 IEEE}
% Remember, if you use this you must call \IEEEpubidadjcol in the second
% column for its text to clear the IEEEpubid mark.

% use for special paper notices
%\IEEEspecialpapernotice{(Invited Paper)}

% make the title area
\maketitle

% As a general rule, do not put math, special symbols or citations
% in the abstract or keywords.
\begin{abstract}
In this paper, the tracking control problem of a class of Euler-Lagrange systems subjected to unknown uncertainties is addressed and an adaptive-robust control strategy, christened as Time-Delayed Adaptive Robust Control (TARC) is presented. The proposed control strategy approximates the unknown dynamics through time-delayed logic, and the switching logic provides robustness against the approximation error. The novel adaptation law for the switching gain, in contrast to the conventional adaptive-robust control methodologies, does not require either nominal modelling or predefined bounds of the uncertainties. Also, the proposed adaptive law circumvents the overestimation-underestimation problem of switching gain. The state derivatives in the proposed control law is estimated from past data of the state to alleviate the measurement error when state derivatives are not available directly. Moreover, a new stability notion for time-delayed control is proposed which in turn provides a selection criterion for controller gain and sampling interval. Experimental result of the proposed methodology using a nonholonomic wheeled mobile robot (WMR) is presented and improved tracking accuracy of the proposed control law is noted compared to time-delayed control and adaptive sliding mode control.
\end{abstract}

% Note that keywords are not normally used for peerreview papers.
\begin{IEEEkeywords}
Adaptive-robust control, Euler-Lagrange system, time-delayed control, state derivative estimation, wheeled mobile robot.
\end{IEEEkeywords}

% For peer review papers, you can put extra information on the cover
% page as needed:
% \ifCLASSOPTIONpeerreview
% \begin{center} \bfseries EDICS Category: 3-BBND \end{center}
% \fi
%
% For peerreview papers, this IEEEtran command inserts a page break and
% creates the second title. It will be ignored for other modes.
\IEEEpeerreviewmaketitle

\section{Introduction}
% The very first letter is a 2 line initial drop letter followed
% by the rest of the first word in caps.
% 
% form to use if the first word consists of a single letter:
% \IEEEPARstart{A}{demo} file is ....
% 
% form to use if you need the single drop letter followed by
% normal text (unknown if ever used by IEEE):
% \IEEEPARstart{A}{}demo file is ....
% 
% Some journals put the first two words in caps:
% \IEEEPARstart{T}{his demo} file is ....
% 
% Here we have the typical use of a "T" for an initial drop letter
% and "HIS" in caps to complete the first word.
\subsection{Background and Motivation}\label{intro}
\IEEEPARstart{D}{esign} of an efficient controller for nonlinear systems subjected to parametric and nonparametric uncertainties has always been a challenging task. Among many other approaches, Adaptive control and Robust control are the two popular control strategies that researchers have extensively employed while dealing with uncertain nonlinear systems. In general, adaptive control uses predefined parameter adaptation laws and equivalence principle based control law which adjusts the parameters of the controller on the fly according to the pertaining uncertainties \cite{Ref:1}. However, this approach has poor transient performance and online calculation of the unknown system parameters and controller gains for complex systems is computationally intensive \cite{Ref:2}. Whereas, robust control aims at tackling the uncertainties of the system within an uncertainty bound defined a priori. It reduces computation complexity to a great extent for complex systems compared to adaptive control as exclusive online estimation of uncertain parameters is not required \cite{Ref:3}. 
However, nominal modelling of the uncertainties is necessary to decide upon their bounds, which is not always possible. Again, to increase the operating region of the controller, often higher uncertainty bounds are assumed. This in turn leads to problems like higher controller gain and consequent possibility of chattering for the switching law based robust controller like Sliding Mode Control (SMC). This in effect reduces controller accuracy \cite{Ref:4}. Higher order sliding mode \cite{Ref:5} can alleviate the chattering problem but prerequisite of uncertainty bound still exists. 
\par  Time-Delayed Control (TDC) is utilized in \cite{Ref:6-1} to implement state derivative feedback for enhancing stability margin of SISO linear time invariant (LTI) systems. In \cite{Ref:6}, \cite{Ref:6-2}, \cite{Ref:25}-\cite{Ref:26}, \cite{Ref:33}, \cite{Ref:34}, TDC is used to provide robustness against uncertainties. In this process, all the uncertain terms are represented by a single function which is then approximated using control input and state information of the immediate past time instant. The advantage of this robust control approach in uncertain systems is that it reduces the burden of tedious modelling of complex system to a great extent. In spite of this, the unattended approximation error, commonly termed as time-delayed error (TDE) causes detrimental effect to the performance of the closed system and its stability. In this front, a few work have been carried out to tackle TDE which includes internal model \cite{Ref:7}, gradient estimator \cite{Ref:8}, ideal velocity feedback \cite{Ref:9}, nonlinear damping \cite{Ref:10} and sliding mode based approach \cite{Ref:11}-\cite{Ref:12}.  The stability of the closed loop system \cite{Ref:7}-\cite{Ref:9}, \cite{Ref:25}-\cite{Ref:26}, depends on the boundedness of TDE as shown in \cite{Ref:6}. This method approximates the continuous time closed loop system in a discrete form without considering the effect of discretization error. Again, the stability criterion mentioned in \cite{Ref:6} restricts the allowable range of perturbation and thus limits controller working range. Stability of the system in \cite{Ref:11} is established in frequency domain, which makes the approach inapplicable to the nonlinear systems. Moreover, the controllers designed in \cite{Ref:10}, and \cite{Ref:12}, \cite{Ref:34} require nominal modelling and upper bound of the TDE respectively which is not always possible in practical circumstances. Also, to the best knowledge of the authors, controller design issues such as selection of controller gains and sampling interval to achieve efficient performance is still an open problem. In contrast to TDC, works reported in \cite{Ref:27}-\cite{Ref:29} use low pass filter to approximate the unknown uncertainties and disturbances. However, frequency range of system dynamics and external disturbances are required to determine the time constant of the filter. Furthermore, the order of the low pass filter needs to be adjusted according to order of the disturbance to maintain stability of the controller.
\par Considering the individual limitations of adaptive and robust control, recently global research is reoriented towards adaptive-robust control (ARC) where switching gain of the controller is adjusted online. The series of publications \cite{Ref:2}, \cite{Ref:13}-\cite{Ref:19} regarding ARC, estimates the uncertain terms online based on predefined projection function, but predefined bound on uncertainties is still a requirement. The work reported in \cite{Ref:20}, \cite{Ref:32} attempts to estimate the maximum uncertainty bound but the integral adaptive law makes the controller susceptible to very high switching gain and consequent chattering \cite{Ref:33}. The adaptive sliding mode control (ASMC) as presented in \cite{Ref:21}-\cite{Ref:22} proposed two laws for the switching gain to adapt itself online according to the incurred error. In the first adaptive law, the switching gain decreases or increases depending on a predefined threshold value. However, until the threshold value is achieved, the switching gain may still be increasing (resp. decreasing) even if tracking error decreases (resp. increases) and thus creates overestimation (resp. underestimation) problem of switching gain \cite{Ref:23}. Moreover, to decide the threshold value the maximum bound of the uncertainty is required. For the second adaptive law, the threshold value changes online according to switching gain. Yet, nominal model of the uncertainties is needed for defining the control law. This limits the adaptive nature of the control law and applicability of the controller. 
\subsection{Problem Definitions and Contributions}\label{sec 1.2}
In this paper, three specific related problems on TDC have been dealt with and the corresponding solutions to the same which are also the contributions of this paper are summarized below: 
\begin{itemize}
\item \textit{ \textbf{Problem 1:} The stability analysis of TDC, as provided in \cite{Ref:6}, \cite{Ref:25}-\cite{Ref:26}, \cite{Ref:33}, \cite{Ref:34}, approximates the continuous time system in discrete time domain without considering the effects of discretization error. Again, choice of the delay time and its relation with the controller gains is still an open problem.} 

  In this paper, a new stability analysis for TDC, based on the Lyapunov-Krasvoskii method, is provided in continuous time domain. Furthermore, through the proposed stability approach, a relation between the sampling interval and controller gain is established.  

\item \textit{\textbf{Problem 2:} %In many applications state derivatives are not directly available. They are calculated numerically which invokes measurement error. For TDC, first and second order state derivatives are required to formulate the control law.
The TDC reported in (\cite{Ref:6}, \cite{Ref:25}-\cite{Ref:26}, \cite{Ref:33}, \cite{Ref:34}), velocity and acceleration feedback are necessary to compute the control law. While in \cite{Ref:6-1}, only velocity feedback is required and acceleration term is approximated numerically using time delay. However, in many applications velocity and acceleration feedback are not available explicitly and numerical approximation of these terms invokes measurement error.}

%TDC is utilized to implement state derivative feedback for enhancing stability margin of SISO linear time invariant (LTI) systems. However, it still requires and .
 As a second contribution of this paper, Filtered Time-Delayed Control (F-TDC) control law is formulated where only position feedback is sufficient while velocity and acceleration terms are estimated using past and present position information to curb the effect measurement error. Stability analysis of the proposed F-TDC is provided which also maintains the relation between controller gains and sampling interval.  
\item \textit{ \textbf{Problem 3:} Robustness property against TDE is essential to achieve good tracking accuracy. The robust controllers reported in literature, either requires nominal model of the uncertainties (\cite{Ref:21}-\cite{Ref:22}) or its predefined bound (\cite{Ref:2}, \cite{Ref:10}, \cite{Ref:12}, \cite{Ref:13}-\cite{Ref:19}). So, it is required to devise a control law which would avoid any prior knowledge of the uncertainties while providing robustness against TDE.}

 Towards the last contribution of this article, an adaptive-robust control strategy, Time-Delayed Adaptive Robust Control (TARC) has been formulated for a class of uncertain Euler-Lagrange systems. The proposed control law approximates uncertainties by time-delayed logic and provides robustness against the TDE, arising from time-delayed logic based estimation, by switching control. The novel adaptive law, presented here, aims at overcoming the overestimation-underestimation problem of the switching gain without any prior knowledge of uncertainties. The proposed adaptive law provides flexibility to the control designer to select any suitable error function according to the application requirement while maintaining similar system stability notion. 
\end{itemize}
\par As a proof of concept, experimental validation of the proposed control methodology is provided using the "PIONEER-3" nonholonomic WMR in comparison to TDC \cite{Ref:6} and ASMC \cite{Ref:21}-\cite{Ref:22}.

\subsection{Organization}
The article is organized as follows: a new stability analysis of TDC along with its design issues is first discussed in Section II. This is followed by the proposed adaptive-robust control methodology and its detail analysis. Section III presents the experimental results of the proposed controller and its comparison with TDC and ASMC. Section IV concludes the entire work.

\subsection{Notations}
The following notations are assumed for the entirety of the paper: any variable $\mu$ delayed by an amount $h$ as $\mu(t-h)$, is denoted as $\mu_{h}$; $\lambda _{min}(\cdot)$ and $|| \cdot ||$ represent minimum eigen value and Euclidean norm of the argument respectively; $I$ represents identity matrix.
%demo file is intended to serve as a ``starter file''
%for IEEE journal papers produced under \LaTeX\ using
%IEEEtran.cls version 1.8a and later.
%% You must have at least 2 lines in the paragraph with the drop letter
%% (should never be an issue)
%I wish you the best of success.
%
%\hfill mds
% 
%\hfill September 17, 2014
\section{Controller Design}\label{sec: 2}
\subsection{Time-Delayed Control: Revisited}\label{sec: 2.1}
In general, an Euler-Lagrange system with second order dynamics, devoid of any delay, can be written as,
\begin{equation}\label{sys}
M(q)\ddot{q}+N(q,\dot{q})=\tau(t),
\end{equation}
where, $q(t)\in\mathbb{R}^{n}$ is the system state, $\tau(t)\in\mathbb{R}^{n}$ is the control input, $M(q)\in\mathbb{R}^{n\times n}$  is the mass/inertia matrix and $N(q,\dot{q})\in\mathbb{R}^{n}$ denotes combination of other system dynamics terms based on system properties. In practice, it can be assumed that unmodelled dynamics and disturbances is subsumed by $N$. The control input is defined to be,
%denotes combination of other system dynamics terms based on system properties such as Coriolis, gravitational, friction, damping forces etc
\begin{equation}\label{input}
\tau=\hat{M}u+\hat{N},
\end{equation}
where, $u$ is the auxiliary control input, $\hat{M}$ and $\hat{N}$ are the nominal values of $M$ and $N$ respectively. To reduce the modelling effort of the complex systems, $\hat{N}$ can be approximated from the input-output data of previous instances using the time-delayed logic (\cite{Ref:6}, \cite{Ref:25}-\cite{Ref:26}) and the system definition (\ref{sys}) as,

\begin{equation}\label{approx}
\hat{N}( q,\dot{q} )\cong N(q_h,\dot{q}_h)=\tau_h-\hat{M}(q_h)\ddot{q}_h,
\end{equation}
where, $h>0$ is a fixed small delay time. Substituting (\ref{input}) and (\ref{approx}) in (\ref{sys}), the system dynamics is converted into an input as well state delayed dynamics as,
\begin{equation}\label{sys new}
\hat{M}(q)\ddot{q}+\bar{N}(q,\dot{q},\ddot{q},\ddot{q}_h)=\tau_h.
\end{equation}
where $\bar{N}=(M-\hat{M})\ddot{q}+\hat{M}_h\ddot{q}_h-\hat{M}u+ N$.

%\begin{equation}\label{approx}
%\hat{H}( t )\cong H_h=\tau_h-\hat{M}_h\ddot{q}_h,
%\end{equation}
%where, $h$  is a fixed small delay time. Putting (\ref{input}) and (\ref{approx}) into (\ref{sys}), the system dynamics is converted into an input as well state delayed dynamics as,
%\begin{equation}\label{sys new}
%\hat{M}\ddot{q}+\bar{H}=\tau_h.
%\end{equation}
%where $\bar{H}=(M-\hat{M})\ddot{q}+\hat{M}_h\ddot{q}_h-\hat{M}u+ H$. 

Let, $q^d(t)$ be the desired trajectory to be tracked and $e_1(t)=q(t)-q^d(t)$ is the tracking error. The auxiliary control input $u$ is defined in the following way,
\begin{equation}\label{aux}
u(t)=\ddot{q}^d(t)-K_2\dot{e}_1(t)-K_1e_1(t),
\end{equation}
where, $K_1$ and $K_2$ are two positive definite matrices with appropriate dimensions. Putting (\ref{aux}) and (\ref{input}) in (\ref{sys new}), following error dynamics is obtained,
\begin{equation}\label{error dyn delayed 2}
\ddot{e}_1=-K_2\dot{e}_{1h}-K_1e_{1h}+\sigma_1,
\end{equation}
where, $\sigma_1=(\hat{M}^{-1}\hat{M}_h-I)u_h+\hat{M}^{-1}(\hat{N}_h-\bar{N})+\ddot{q}^d_h-\ddot{q}^d$ and can be treated as overall uncertainty. Further, (\ref{error dyn delayed 2}) can be written in state space form as,
\begin{equation}\label{new err dyn 1}
\dot{e}=A_1e+B_1e_h+B\sigma_1,
\end{equation}
where, $e=\begin{bmatrix}
e_1\\
\dot{e}_1
\end{bmatrix}$, $ A_1=\begin{bmatrix}
0 & I \\
0 & 0
\end{bmatrix} 
, 
B_1=\begin{bmatrix}
0 & 0\\
-K_1 & -K_2
\end{bmatrix}$, $B=\begin{bmatrix}
0\\
I
\end{bmatrix}$. Noting that, $e(t-h)=e(t)-\int\limits_{-h}^0 \dot{e}(t+\theta)\mathrm{d}\theta$, where the derivative inside the integral is with respect to $\theta$, the error dynamics (\ref{new err dyn 1}) is modified as,
\begin{equation}\label{error dyn delayed}
\dot{e}(t)=Ae(t)-B_1\int\limits_{-h}^0 \dot{e}(t+\theta)\mathrm{d}\theta+B\sigma_1,
\end{equation}
%where, $A=A_1+B_1$.

%\subsubsection*{Stability Analysis}
%\begin{prb}
%The stability analysis of TDC, as provided in \cite{Ref:6}, \cite{Ref:25}-\cite{Ref:26}, \cite{Ref:33}, approximates the continuous time system in discrete time domain without considering the effects of discretization error. Again, choice of the delay time $h$ and its relation with the controller gains $K_1$ and $K_2$ is still an open problem.  
%\end{prb}
%\subsubsection*{Solution}
where, $A=A_1+B_1$. It is assumed that the choice controller gains $K_1$ and $K_2$ makes $A$ Hurwitz which is always possible. Also, it is assumed that the unknown uncertainties are bounded. In this paper, a new stability criterion, based on the Lyapunov-Krasvoskii method, is presented through Theorem 1 which addresses the issues defined in Problem 1. 
%Before stating Theorem 1, the following assumptions are made: 
%\begin{assumption}\label{assum 1}
%
%\end{assumption}
%\begin{lemma}
%[\cite{Ref:31}]: Using Jensen's inequality, for a positive definite matrix $D>0$ the following inequality holds,
%\begin{align}
%\int_{-h}^{0}e^{T}(\psi)De(\psi)d\psi \geq \frac{1}{h}\int_{-h}^{0}e^{T}(\psi)d\psi D \int_{-h}^{0}e(\psi)d\psi.\label{lemma 1}
%\end{align}
%\end{lemma}
%\begin{assum}\label{assum 2}
%Let, $V(e)$ be a Lyapunov function candidate. Then  following the Razumikhin-type theorem \cite{Ref:36}, for some constant $r>1$, let the following inequality holds:
%\begin{equation}\label{razu}
%V(e(\nu ))<rV(e(t)),\qquad t-2h\leq \nu \leq t.
%\end{equation}
%\end{assum}
\begin{theorem}
The system (\ref{sys new}) employing the control input (\ref{input}), having auxiliary control input (\ref{aux}) is UUB if the controller gains and delay time is selected such that the following condition holds:
\begin{equation}\label{delay value}
\Psi=\begin{bmatrix}
Q-E-(1+\xi)\frac{h^2}{\beta}D & 0\\ 
0 & (\xi-1)\frac{h^2}{\beta}D
\end{bmatrix}>0
\end{equation}

where, $E=\beta PB_1( A_1D^{-1}A_1^T+B_1D^{-1}B_1^{T}+D^{-1} )B_1^{T}P$, $\xi>1$ and $\beta>0$ are scalar, and $P>0$ is the solution of the Lyapunov equation $A^TP+PA=-Q$ for some $Q>0$. 
\end{theorem}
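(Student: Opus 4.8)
The plan is to establish UUB through the Lyapunov--Krasovskii method, exploiting the reformulated error dynamics (\ref{error dyn delayed}) in which the delay appears as a distributed integral of $\dot{e}$. I would take the functional
\[
V(t) = e^{T}(t) P e(t) + \int_{-h}^{0}\int_{t+\theta}^{t}\dot{e}^{T}(s)\, D\, \dot{e}(s)\,\mathrm{d}s\,\mathrm{d}\theta ,
\]
with $P>0$ solving $A^{T}P+PA=-Q$ and $D>0$ a weight to be fixed. Since $P>0$ and the double integral is nonnegative, $V$ is a legitimate LK functional, and the delay-dependent second term is exactly what will produce the $h$-dependence appearing in (\ref{delay value}).

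Differentiating along trajectories, the $e^{T}P\dot{e}$ part, using (\ref{error dyn delayed}) and the Lyapunov equation, yields $-e^{T}Qe$ plus the cross term $-2e^{T}PB_{1}\int_{-h}^{0}\dot{e}(t+\theta)\,\mathrm{d}\theta$ and the disturbance term $2e^{T}PB\sigma_{1}$; the Leibniz rule on the double integral contributes $h\,\dot{e}^{T}D\dot{e}-\int_{-h}^{0}\dot{e}^{T}(t+\theta)D\dot{e}(t+\theta)\,\mathrm{d}\theta$. The central step is the cross term: substituting the equivalent representation (\ref{new err dyn 1}), namely $\dot{e}=A_{1}e+B_{1}e_{h}+B\sigma_{1}$, inside the integral and applying Young's inequality $2a^{T}b\le\beta\,a^{T}D^{-1}a+\beta^{-1}b^{T}Db$ separately to the $A_{1}$, $B_{1}$ and $B$ contributions produces precisely the matrix $E=\beta PB_{1}(A_{1}D^{-1}A_{1}^{T}+B_{1}D^{-1}B_{1}^{T}+D^{-1})B_{1}^{T}P$ acting on $e$, while leaving residual integrals of $D$-weighted quantities.

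Those residuals, together with the $\int_{-h}^{0}\dot{e}^{T}D\dot{e}\,\mathrm{d}\theta$ term from the Krasovskii derivative, I would collapse using the Jensen/Cauchy--Schwarz bound $\big(\int_{-h}^{0} x(\theta)\,\mathrm{d}\theta\big)^{T}D\big(\int_{-h}^{0} x(\theta)\,\mathrm{d}\theta\big)\le h\int_{-h}^{0} x^{T}(\theta)Dx(\theta)\,\mathrm{d}\theta$, which converts the distributed-delay terms into $\tfrac{h^{2}}{\beta}D$-weighted quadratics in $e$ and in the delayed state. The scalar $\xi>1$ then splits the $D$-budget so that a portion is charged against $Q-E$ (the $(1+\xi)$ weight in the leading block) and the remainder is retained as the strictly positive block $(\xi-1)\tfrac{h^{2}}{\beta}D$ dominating the delayed state. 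Collecting everything gives $\dot{V}\le -\zeta^{T}\Psi\zeta+2e^{T}PB\sigma_{1}$ with $\zeta=[e^{T}\ e_{h}^{T}]^{T}$.

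Finally, because $A$ is Hurwitz and $\sigma_{1}$ is bounded by assumption, $\Psi>0$ renders the quadratic form negative definite, so $\dot{V}\le -\lambda_{min}(\Psi)\lVert\zeta\rVert^{2}+c\,\lVert e\rVert$ for some constant $c$, which is negative whenever $\lVert e\rVert$ exceeds a threshold set by the bound on $\sigma_{1}$; this is the standard UUB conclusion. The main obstacle I anticipate is the bookkeeping in the cross-term step: expanding the distributed delay via the dynamics and choosing $D$, $\beta$, $\xi$ so that the Young and Jensen residuals assemble exactly into the block-diagonal $\Psi$ (in particular so that the cross terms between $e$ and $e_{h}$ cancel), all while keeping $\sigma_{1}$ isolated for the ultimate-bound estimate.
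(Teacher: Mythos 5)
Your overall strategy---a Lyapunov--Krasovskii functional, rewriting the delay as $e_h=e-\int_{-h}^{0}\dot e(t+\theta)\mathrm{d}\theta$, substituting the dynamics (\ref{new err dyn 1}) inside the cross-term integral, and applying Young's and Jensen's inequalities to the three resulting pieces---is exactly the paper's, and your identification of where $E$ comes from is correct. The gap is in the functional itself: the single double integral $\int_{-h}^{0}\int_{t+\theta}^{t}\dot e^{T}(s)D\dot e(s)\,\mathrm{d}s\,\mathrm{d}\theta$ cannot close the bookkeeping you describe. After you substitute $\dot e(t+\theta)=A_{1}e(t+\theta)+B_{1}e(t-h+\theta)+B\sigma_{1}(t+\theta)$ and apply Young/Jensen, the residuals are $\frac{h}{\beta}\int_{-h}^{0}e^{T}(t+\theta)De(t+\theta)\mathrm{d}\theta$ and $\frac{h}{\beta}\int_{-h}^{0}e^{T}(t-h+\theta)De(t-h+\theta)\mathrm{d}\theta$, i.e.\ integrals of the \emph{state}, not of $\dot e$; the negative term $-\int_{-h}^{0}\dot e^{T}(t+\theta)D\dot e(t+\theta)\mathrm{d}\theta$ produced by your functional does not cancel them. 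Moreover, your functional generates the extra positive term $h\,\dot e^{T}(t)D\dot e(t)$, which upon re-substitution of the dynamics creates off-diagonal cross terms between $e$ and $e_{h}$ (and terms in $\sigma_{1}$) that do not appear in the block-diagonal $\Psi$ and are absorbed nowhere; the cancellation you flag as the ``main obstacle'' indeed fails with this choice.

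What the paper actually uses is $V=V_{1}+V_{2}+V_{3}+V_{4}$ with $V_{2}=\frac{h}{\beta}\int_{-h}^{0}\int_{t+\theta}^{t}e^{T}(\psi)De(\psi)\mathrm{d}\psi \mathrm{d}\theta$, $V_{3}$ the same with argument $\psi-h$, and $V_{4}=\xi\frac{h^{2}}{\beta}\int_{t-h}^{t}e^{T}(\psi)De(\psi)\mathrm{d}\psi$. The derivatives of $V_{2}$ and $V_{3}$ supply exactly the two negative state integrals needed to cancel the Young/Jensen residuals, at the price of adding $\frac{h^{2}}{\beta}e^{T}De$ and $\frac{h^{2}}{\beta}e_{h}^{T}De_{h}$; $V_{4}$ then converts the latter (positive, hence destabilizing) delayed-state term into the $(\xi-1)\frac{h^{2}}{\beta}D$ block while contributing $\xi\frac{h^{2}}{\beta}D$ to the first block---this is the actual mechanism behind the $\xi$-split, which a single-term functional cannot reproduce. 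You also drop an additive constant: the $\sigma_{1}(t+\theta)$ residual from the third Young/Jensen step survives as $\Gamma_{1}$ (assumed finite via square-integrability of the uncertainty over the delay window), so the final inequality is $\dot V\le-\lambda_{min}(\Psi)\|\bar e\|^{2}+\Gamma_{1}+2\iota\|\sigma_{1}\|\,\|\bar e\|$ rather than a purely linear perturbation in $\|e\|$, and $\Gamma_{1}$ enters the ultimate bound (\ref{error bound}).
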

\begin{proof}
Let us consider the following Lyapunov function:
\begin{equation}\label{lyapunov}
V(e)=V_1(e)+V_2(e)+V_3(e)+V_4(e),
\end{equation} 
where,
\begin{align}
V_1(e)&=e^TPe\\
V_2(e)&=\frac{h}{\beta}\int_{-h}^{0}\int_{t+\theta}^{t}e^T(\psi )De(\psi )d\psi d\theta\\
V_3(e)&=\frac{h}{\beta}\int_{-h}^{0}\int_{t+\theta}^{t}e^T(\psi-h )De(\psi-h )d\psi d\theta\\
V_4(e)&=\xi \frac{h^2}{\beta}\int_{t-h}^{t}e^T(\psi)De(\psi)d\psi
\end{align}
Using (\ref{error dyn delayed}), the time derivative  of $V_1(e)$ yields,
\begin{equation}\label{lya_dot for time delay}
\dot{V}_1(e)=-e^TQe-2e^{T}PB_1\int_{-h}^{0}\dot{e}(t+\theta)d\theta+2\hat{s}^T\sigma_1
\end{equation}
where, $\hat{s}=B^TPe$. Again using (\ref{new err dyn 1}),
\begin{align}
-2e^{T}PB_1\int_{-h}^{0}\dot{e} & (t+\theta)d\theta = - 2e^{T}PB_1\int_{-h}^{0}[ A_1e(t+\theta ) +  B_1e(t-h+\theta )+B\sigma _1(t+\theta ) ]d\theta,\label{relation 1} 
\end{align}
%Applying (\ref{razu}) to (\ref{lyapunov}) the following relation is achieved,
%\begin{equation}\label{razu 1}
%e^{T}(\nu )Pe(\nu )<re^{T}(t)Pe(t),
%\end{equation}  
For any two non zero vectors $z_1$ and $z_2$, there exists a scalar $\beta>0$ and matrix $D>0$ such that the following inequality holds,
\begin{equation}\label{ineq 2}
\pm 2z_1^{T}z_2\leq \beta z_1^{T}D^{-1}z_1+(1/\beta )z_2^{T}Dz_2.
\end{equation}
Again, using Jensen's inequality the following inequality holds \cite{Ref:31},
\begin{align}
\int_{-h}^{0}e^{T}(\psi)De(\psi)d\psi \geq \frac{1}{h}\int_{-h}^{0}e^{T}(\psi)d\psi D \int_{-h}^{0}e(\psi)d\psi.\label{lemma 1}
\end{align}
Applying (\ref{ineq 2}) and (\ref{lemma 1}) to (\ref{relation 1}) the following inequalities are obtained,
\begin{align}
&- 2e^{T}PB_1 A_1\int_{-h}^{0} e(t+\theta)d\theta \leq \beta e^{T}PB_1A_1D^{-1}A_1^{T}B_1^{T}Pe +\frac{1}{\beta}\int_{-h}^{0}e^{T}(t+\theta )d\theta D \int_{-h}^{0} e(t+\theta ) d\theta\nonumber\\
%& \qquad \qquad \qquad \ \qquad \qquad +\frac{1}{\beta}\int_{-h}^{0}e^{T}(t+\theta )d\theta D \int_{-h}^{0} e(t+\theta ) d\theta \nonumber\\
&\qquad \qquad \qquad \qquad \qquad \qquad \leq \beta e^{T}[PB_1A_1D^{-1}A_1^{T}B_1^{T}P]e +\frac{h}{\beta}\int_{-h}^{0}e^{T}(t+\theta ) D e(t+\theta ) d\theta \label{cond1}\\
%& \qquad \qquad \qquad \qquad \qquad \qquad \qquad \qquad \times D e(t+\theta ) d\theta \label{cond1}\\
&- 2e^{T}PB_1B_1\int_{-h}^{0}   e(t-h+\theta)d\theta \leq \beta e^{T}PB_1B_1D^{-1} B_1^{T}B_1^{T}Pe+\frac{1}{\beta}\int_{-h}^{0}e^{T}(t-h+\theta )d\theta D \int_{-h}^{0} e(t-h+\theta )d\theta  \nonumber\\
%&\times B_1^{T}B_1^{T}Pe+\frac{1}{\beta}\int_{-h}^{0}e^{T}(t-h+\theta )d\theta D \int_{-h}^{0} e(t-h+\theta )d\theta \nonumber\\
& \qquad \qquad \qquad \qquad \qquad \qquad \leq \beta e^{T}[PB_1B_1D^{-1}B_1^{T}B_1^{T}P]e+\frac{h}{\beta}\int_{-h}^{0}e^{T}(t-h+\theta ) D e(t-h+\theta )d\theta \label{cond2} \\
%& \\qquad  \times D e(t-h+\theta )d\theta \label{cond2}\\
&- 2e^{T}PB_1\int_{-h}^{0} [ B \sigma _1(t+\theta)]d\theta \leq \beta e^{T}PB_1D^{-1}B_1^{T}Pe+ \frac{1}{\beta}\int_{-h}^{0}(B \sigma _1(t+\theta))^{T}d\theta D \int_{-h}^{0} B\sigma _1(t+\theta)]d\theta   \nonumber\\ 
%& \qquad \quad   \nonumber \\
& \qquad \qquad \qquad \qquad \qquad \qquad\leq \beta e^{T}\left [  PB_1D^{-1}B_1^{T}P \right ]e+\frac{h}{\beta}\int_{-h}^{0}(B \sigma _1(t+\theta))^{T} D \sigma _1(t+\theta)  d\theta \label{cond3}
%& \qquad \qquad \qquad \qquad \qquad \qquad \qquad \times B \sigma _1(t+\theta)  d\theta \label{cond3} 
\end{align}
Since $D>0$, we can write $D=\bar{D}^T\bar{D}$ for some $\bar{D}>0$. Then, assuming the uncertainties to be square integrable within the delay, let there exists a scalar $\Gamma_1>0$ such that the following inequality holds:
\begin{align}\label{cond41}
\frac{h}{\beta }\left \| \int_{-h}^{0}\left [ (B \sigma _1(t+\theta))^{T}\bar{D}^T\bar{D}B \sigma _1(t+\theta) \right ]d\theta  \right \|\leq \Gamma_1. 
\end{align}

%\begin{equation}\label{dot v1 new}
%\dot{V}_1(e) \leq -e^{T}\left [ Q-E \right]e+ \Gamma_1+ 2\hat{s}^{T}\sigma_1. 
%\end{equation}
Again,
\begin{align}
\dot{V}_2(e)&=\frac{h^2}{\beta}e^TDe-\frac{h}{\beta}\int_{-h}^{0}e^T(t+\theta) D e(t+\theta)d\theta \label{dot v2 new}\\ 
\dot{V}_3(e)&=\frac{h^2}{\beta}e^T_h D e_h-\frac{h}{\beta}\int_{-h}^{0}e^T(t-h+\theta) D e(t-h+\theta)d\theta \label{dot v3 new}\\  
\dot{V}_4(e)&=\xi \frac{h^2}{\beta}(e^TDe-e^T_hDe_h)\label{dot v4 new}
\end{align}
Substituting (\ref{cond1})-(\ref{cond41}) into (\ref{lya_dot for time delay}) and adding it with (\ref{dot v2 new})-(\ref{dot v4 new}) yields,
\begin{equation}
\dot{V}(e)\leq -\bar{e}^T \Psi \bar{e}+\Gamma_1+2\hat{s}^T\sigma_1,
\end{equation}
where, $\bar{e}=\begin{bmatrix}
e^T & e^T_h
\end{bmatrix}^T$. Let controller gains $K_1, K_2$ and delay time $h$ are selected to make $\Psi>0$.
One can find a positive scalar $\iota$ such that $|| \hat{s} || \leq \iota || || \bar{e}||$. Then, $\dot{V}(e)<0$ would be established if $\lambda _{min}(\Psi)|| \bar{e}||^{2} >  \Gamma_1 +2 \iota || \sigma_1 ||   || \bar{e}||$.
%Then, $\dot{V}(e)<0$ would be established if $\lambda _{min}(\Psi)|| \bar{e}||^{2} >  \Gamma_1 +2|| \sigma_1 || || \hat{s} ||=\varpi_0$. 
Thus (\ref{sys new}) would be UUB with the ultimate bound,
\begin{equation}\label{error bound}
%||\bar{e}||=\sqrt{\frac{2 \Gamma_1 +2|| \sigma_1 || || \hat{s} ||}{\lambda _{min}(\Psi)}}=\varpi_0.
%||\bar{e}||=\sqrt{\frac{\varpi_0}{\lambda _{min}(\Psi)}}.
||\bar{e}||=\gamma_1+\sqrt{\frac{\Gamma_1}{\lambda _{min}(\Psi)}+\gamma_1^2}=\varpi_0.
\end{equation}
where, $\gamma_1=\frac{\iota || \sigma_1 ||}{\lambda _{min}(\Psi)}$.
Let $\Xi$ denote the smallest level surface of $V$ containing the ball $B_{\varpi_0}$ with radius $\varpi_0$ centred at $\bar{e}=0$. For initial time $t_0$, if $\bar{e}(t_0)\in \Xi$ then the solution remains in $\Xi$. If  $\bar{e}(t_0)\notin \Xi$ then $V$ decreases as long as $\bar{e}(t)\notin \Xi$. The time required to reach $\varpi_0$ is zero when $\bar{e}(t_0)\in \Xi$, otherwise, while $\bar{e}(t_0)\notin \Xi$ the finite time $t_{r0}$ to reach $\varpi_0$, for some $c_0>0$, is given by \cite{Ref:24},
\begin{equation*}
t_{r0}-t_0\leq (||\bar{e}(t_0) ||-\varpi_0)/c_0 \quad \text{where} \quad \dot{V}(t)\leq-c_0
\end{equation*}
\end{proof}
\begin{remark}:
Since $E$ depends on the controller gains, (\ref{delay value}) provides a selection criterion for the choice of delay $h$ for given controller gains and $Q$. This design issue was previously unaddressed in the literature. Moreover, the approximation error $(\hat{N}-N)$, as in (\ref{approx}), would reduce for small values of $h$. However, $h$ cannot be selected smaller than the sampling interval because, the input output data is only available at sampling intervals. So, the lowest possible selection of $h$ is the sampling interval. Again, choice of sampling interval is governed by the corresponding hardware response time, computation time etc. Hence, the proposed stability approach provides a necessary step for the selection of sampling interval for given controller gains or vice-versa.
\end{remark}
\subsection{Filtered Time-Delayed Control (F-TDC)}\label{2.2}
%\begin{prb}
%It can be noticed from (\ref{approx}) and (\ref{aux}) the state derivatives are necessary to compute the control law and are calculated numerically when state derivatives are not available explicitly. This induces measurement error.    
%\end{prb}
%\begin{assumption}\label{assum 2}
%All the states i.e. $q$ are available, however $\dot{q}, \ddot{q}$ is not available.
%\end{assumption}

%\subsubsection*{Solution}
\par It can be noticed from (\ref{approx}) and (\ref{aux}) that state derivatives are necessary to compute the control law of TDC. However, in many circumstances, only $q$ is available amongst $q, \dot{q}, \ddot{q}$. Under this scenario, a new control strategy F-TDC is proposed, which estimates the state derivatives from the state information of past instances \cite{Ref:30}. Before proposing the control structure of F-TDC, the following two Lemmas are stated which are instrumental for formulation as well as stability analysis of F-TDC.

\begin{lemma}
[\cite{Ref:30}]: For time $t \geq \varsigma$, the $j$-th order time derivative of the $\Lambda$-th degree polynomial $q$ in (\ref{sys new}) can be computed in the following way,
\begin{equation}
\hat{q}^{(j)}(t)=\int_{-\varsigma}^{0}\Omega _{j}(\varsigma ,\psi)q(t+\psi)d\psi\label{lemma 2}
\end{equation}
where, 
$\varsigma>0$ is a prespecified scalar and 
\begin{align}
\Omega _{j}(\varsigma ,\psi) =\frac{(\Lambda+1+j)!}{\varsigma ^{(j+1)}j!(\Lambda-j)!}& \sum_{k=0}^{\Lambda}\frac{(-1)^{k}(\Lambda+1+k)!}{(j+k+1)(\Lambda-k)!(k!)^{2}} \left ( \frac{-\psi}{\varsigma} \right )^{k}.\label{omega}
\end{align}
%\begin{proof}
%A sketch of the proof is provided in Appendix A.
%\end{proof}
\end{lemma}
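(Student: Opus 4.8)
The plan is to exploit the defining property that the estimator (\ref{lemma 2}) must be \emph{exact} on the whole space of polynomials of degree at most $\Lambda$, so that it suffices to test it against a basis of that space. Since $q$ is assumed to be a polynomial of degree $\Lambda$, I would first write its finite (hence exact) Taylor expansion about the current instant $t$,
\begin{equation*}
q(t+\psi)=\sum_{i=0}^{\Lambda}\frac{q^{(i)}(t)}{i!}\,\psi^{i},\qquad \psi\in[-\varsigma,0].
\end{equation*}
Substituting this into the right-hand side of (\ref{lemma 2}) and interchanging the finite sum with the integral reduces the whole claim to a single family of \emph{moment conditions} on the kernel,
\begin{equation*}
\int_{-\varsigma}^{0}\Omega_{j}(\varsigma,\psi)\,\psi^{i}\,d\psi=j!\,\delta_{ij},\qquad i=0,1,\dots,\Lambda,
\end{equation*}
where $\delta_{ij}$ is the Kronecker delta: if these hold, every term except $i=j$ drops out and the right-hand side of (\ref{lemma 2}) collapses to exactly $q^{(j)}(t)$.

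The second step is to evaluate these moments using the explicit kernel (\ref{omega}). Because each summand of $\Omega_{j}$ carries the single monomial $(-\psi/\varsigma)^{k}$, every moment reduces to a finite combination of elementary integrals $\int_{-\varsigma}^{0}\psi^{i+k}\,d\psi=(-1)^{i+k}\varsigma^{\,i+k+1}/(i+k+1)$. Carrying out the integration and collecting the powers of $\varsigma$ yields
\begin{equation*}
\int_{-\varsigma}^{0}\Omega_{j}(\varsigma,\psi)\,\psi^{i}\,d\psi=\frac{(-1)^{i}\varsigma^{\,i-j}(\Lambda+1+j)!}{j!\,(\Lambda-j)!}\sum_{k=0}^{\Lambda}\frac{(-1)^{k}(\Lambda+1+k)!}{(j+k+1)(i+k+1)(\Lambda-k)!(k!)^{2}},
\end{equation*}
so the moment condition reduces to the purely combinatorial statement that the inner $k$-sum vanishes whenever $i\ne j$ and equals $(-1)^{j}(j!)^{2}(\Lambda-j)!/(\Lambda+1+j)!$ when $i=j$, the factor $\varsigma^{\,i-j}$ being harmless since it equals $1$ on the diagonal and multiplies a vanishing sum off it.

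The main obstacle is precisely this finite summation over $k$ for each fixed pair $(i,j)$ with $0\le i,j\le\Lambda$. The cleanest route is to recognise that the kernel $\Omega_{j}$ is assembled from the shifted Jacobi (equivalently shifted Legendre-type) orthogonal polynomials on the window $[-\varsigma,0]$: the off-diagonal vanishing is then inherited directly from the orthogonality of the Jacobi family to every monomial of strictly lower degree, while the diagonal case $i=j$ merely fixes the normalising factor $j!$. Alternatively, the sum is a terminating balanced hypergeometric series that can be summed in closed form by Saalsch\"utz-type identities, or proved elementarily by induction on $\Lambda$ through Pascal-type recurrences for the factorial coefficients, the alternating signs forcing the cancellations that produce $\delta_{ij}$. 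Since the lemma is quoted verbatim from \cite{Ref:30}, in the actual write-up I would present the Taylor reduction as the conceptual core and either cite \cite{Ref:30} for the combinatorial identity or relegate its verification to an appendix.
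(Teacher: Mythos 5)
The paper offers no proof of this lemma at all: it is imported verbatim from \cite{Ref:30} and used as a black box, so there is no internal argument to compare yours against. Judged on its own terms, your reduction is the right one and is carried out correctly. The Taylor step is exact because $q$ has degree $\Lambda$, the interchange of the finite sum with the integral is harmless, and the resulting biorthogonality conditions $\int_{-\varsigma}^{0}\Omega_{j}(\varsigma,\psi)\,\psi^{i}\,d\psi=i!\,\delta_{ij}$ are exactly what exactness on the degree-$\Lambda$ polynomial space requires. Your explicit moment formula also checks out: with $\int_{-\varsigma}^{0}(-\psi/\varsigma)^{k}\psi^{i}\,d\psi=(-1)^{i}\varsigma^{i+1}/(i+k+1)$ one lands on precisely the double-denominator sum you display, and the required diagonal value $(-1)^{j}(j!)^{2}(\Lambda-j)!/(\Lambda+1+j)!$ is consistent (I verified the $\Lambda=1$ cases $(i,j)\in\{(0,0),(1,1),(0,1),(1,0)\}$ directly).

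The genuine gap is that the proposal stops exactly where the work starts: the terminating alternating sum over $k$ is never actually summed. You name three candidate mechanisms but execute none, and the one you lean on most, ``orthogonality of the Jacobi family to every monomial of strictly lower degree,'' cannot by itself yield $\delta_{ij}$ --- it would only kill the cases $i<\deg$, whereas you need vanishing for all $i\neq j$ with $0\le i\le\Lambda$, i.e.\ the full dual-basis property. The clean way to finish along your preferred route is to observe that $\Omega_{j}(\varsigma,\cdot)$ is the kernel of the map ``$L^{2}$-project onto polynomials of degree $\le\Lambda$ on $[-\varsigma,0]$, then take the $j$-th derivative at the right endpoint,'' i.e.\ $\Omega_{j}(\varsigma,\psi)=\sum_{m=0}^{\Lambda}P_{m}^{(j)}(0)\,P_{m}(\psi)/\|P_{m}\|^{2}$ for the shifted Legendre/Jacobi family $\{P_{m}\}$ on the window; since the projection is the identity on that space, exactness is immediate and the combinatorial identity is a corollary rather than a prerequisite. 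Absent that (or a Saalsch\"utz-type summation written out), the proposal is a correct reduction plus an unproven identity; falling back on citing \cite{Ref:30} for it, as you suggest, is legitimate and is in effect what the paper itself does.
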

\begin{lemma}
For any non zero vector $\vartheta(\psi)$, constant matrix $F>0$ the following relation holds,
\begin{align}
&\int_{-h}^{0}\int_{-\varsigma}^{0}\vartheta ^T(\psi) F \vartheta(\psi) d\psi d\theta \geq \frac{1}{h \varsigma}\left \{  \int_{-h}^{0}\int_{-\varsigma}^{0}\vartheta ^T(\psi)d\psi d\theta\right \} F \left \{  \int_{-h}^{0}\int_{-\varsigma}^{0}\vartheta(\psi)d\psi d\theta\right \}\label{lemma 3}
\end{align}
%\begin{proof}
%Since $F>0$, we have
%\begin{equation}\label{1}
%\begin{bmatrix}
%\vartheta^T (\psi) F \vartheta (\psi) & \vartheta ^T(\psi) \\ 
% \vartheta (\psi) & F^{-1}
%\end{bmatrix} \geq 0
%\end{equation}
%Integrating (\ref{1}) from $-\varsigma$ to $0$ gives,
%\begin{equation}\label{2}
%\begin{bmatrix}
%\int_{-\varsigma}^{0}\vartheta^T (\psi)F \vartheta (\psi)d\psi  & \int_{-\varsigma}^{0}\vartheta ^T(\psi) d\psi \\ 
% \int_{-\varsigma}^{0} \vartheta (\psi) d\psi  & \varsigma F^{-1}
%\end{bmatrix} \geq 0 
%\end{equation}
%Again, integrating (\ref{2}) from $-h$ to $0$ gives,
%\begin{equation}\label{3}
%\begin{bmatrix}
%\int_{-h}^{0}\int_{-\varsigma}^{0}\vartheta^T (\psi) F \vartheta (\psi)d\psi d\theta  & \int_{-h}^{0}\int_{-\varsigma}^{0}\vartheta ^T(\psi)d\psi d\theta  \\ 
%\int_{-h}^{0} \int_{-\varsigma}^{0} \vartheta (\psi) d\psi d\theta & h \varsigma F^{-1}
%\end{bmatrix} \geq 0
%\end{equation}
%Applying Schur's complement to (\ref{3}), (\ref{lemma 3}) can be achieved.
%\end{proof}
\end{lemma}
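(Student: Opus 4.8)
The plan is to prove (\ref{lemma 3}) as a matrix-weighted, two-dimensional analogue of the Jensen inequality (\ref{lemma 1}), using nothing beyond the positive definiteness of $F$. First I would regard the left-hand side as an integral over the rectangular domain $[-h,0]\times[-\varsigma,0]$, whose measure is $h\varsigma$, and introduce the domain-average vector
\[
c=\frac{1}{h\varsigma}\int_{-h}^{0}\int_{-\varsigma}^{0}\vartheta(\psi)\,d\psi\,d\theta .
\]
Because $F>0$, the scalar integrand $(\vartheta(\psi)-c)^{T}F(\vartheta(\psi)-c)$ is nonnegative at every point of the domain, so its double integral is nonnegative as well. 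This single observation is the engine of the whole argument.

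Next I would expand that nonnegative integral. Writing $J=\int_{-h}^{0}\int_{-\varsigma}^{0}\vartheta(\psi)\,d\psi\,d\theta$, the expansion yields three pieces: the target quadratic $\int_{-h}^{0}\int_{-\varsigma}^{0}\vartheta^{T}F\vartheta\,d\psi\,d\theta$, a cross term $-2c^{T}FJ$, and the constant term $(h\varsigma)\,c^{T}Fc$. Substituting $c=J/(h\varsigma)$ collapses the last two into $-\tfrac{1}{h\varsigma}J^{T}FJ$, and rearranging the inequality $\int\!\int(\vartheta-c)^{T}F(\vartheta-c)\ge 0$ reproduces (\ref{lemma 3}) exactly. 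An equivalent route, paralleling the factorization $D=\bar{D}^{T}\bar{D}$ already invoked in the proof of Theorem~1, is to write $F=\bar{F}^{T}\bar{F}$ with $\bar{F}>0$ and apply the integral Cauchy--Schwarz inequality to $\bar{F}\vartheta$ over the domain of measure $h\varsigma$, giving $\bigl\|\int\!\int \bar{F}\vartheta\bigr\|^{2}\le h\varsigma\int\!\int\|\bar{F}\vartheta\|^{2}$, which is the same statement after reinserting $\bar{F}^{T}\bar{F}=F$.

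The step demanding the most care is bookkeeping rather than analysis. Since $\vartheta$ depends only on $\psi$, the outer $\theta$-integration merely contributes a factor $h$, so I must track the $h$ and $\varsigma$ factors consistently to arrive at precisely the $1/(h\varsigma)$ normalization appearing in (\ref{lemma 3}). There is no genuine obstacle here—the positive semidefiniteness of the completed-square integrand does all the work—but one must verify that the constant vector $c$ is chosen to be exactly the domain average, for only then do the cross term and the constant quadratic term combine to leave the clean bound $\tfrac{1}{h\varsigma}J^{T}FJ$.
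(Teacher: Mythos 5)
Your proof is correct: the completion-of-square argument with the domain average $c=J/(h\varsigma)$, resting only on the pointwise nonnegativity of $(\vartheta-c)^{T}F(\vartheta-c)$ for $F>0$, cleanly yields the stated bound, and your bookkeeping of the factor $h$ contributed by the $\theta$-integration is right. The paper itself states this lemma without proof, treating it as the double-integral analogue of the Jensen inequality it cites from the time-delay literature for the single-integral case (\ref{lemma 1}); your argument is precisely the standard derivation underlying that inequality (equivalently, your Cauchy--Schwarz route via $F=\bar{F}^{T}\bar{F}$), so it supplies the omitted proof rather than diverging from the paper's approach.
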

The structure of F-TDC is similar to (\ref{input}), except, the auxiliary control input $u$ and $\hat{N}$ in (\ref{input}) selected in the following way,
\begin{align}
u(t)&=\ddot{q}^d(t)-K_1e_1(t)-K_2\dot{\hat{e}}_1(t)\label{u hat}\\
\hat{N}(t)& \cong N_{h}=u_{h}-\hat{M}_{h}\ddot{\hat{q}}_{h}, \label{h hat new}
\end{align}
where, $\dot{\hat{e}}_1=\dot{\hat{q}}-\dot{q}^d$. $\dot{\hat{q}}$ and $\ddot{\hat{q}}$ are evaluated from (\ref{lemma 2}) and (\ref{omega}).
%\subsubsection*{Stability Analysis for F-TDC}
The stability of the system (\ref{sys new}) employing F-TDC is derived in the sense of  Uniformly Ultimately Bounded (UUB) notion as stated in Theorem 2.

\begin{theorem}
The system (\ref{sys new}) employing the control input (\ref{input}), having the auxiliary input (\ref{u hat}) and (\ref{h hat new}) is UUB if $K_1, K_2, h$ and $\varsigma$ are selected such that the following condition holds:
%\begin{align}
%&\begin{bmatrix}
%Q-\bar{E}-(1+\xi)\frac{h^2}{\beta}D& 0 & P\bar{B}\\ 
%0 & (\xi-1)\frac{h^2}{\beta}D-\bar{F} &0\\
%\bar{B}^TP & 0 & R
%\end{bmatrix} =\Theta>0 \label{delay value 1}
%\end{align}
%
%where, $\bar{E}=\beta P \bar{B}_1( \bar{A}_1D^{-1}\bar{A}_1^T+\bar{B}_1D^{-1}\bar{B}_1^{T}+D^{-1}+\bar{B}D^{-1}\bar{B}^{T} )\bar{B}_1^{T}P$, $\bar{F}=(\frac{h^2}{\beta}D+R) \varsigma \int_{-\varsigma}^{0}A_d^2(\psi)d\psi$, $A_d(\psi)=\Omega _{1}(\varsigma ,\psi)$, $\bar{B}=B\begin{bmatrix}
% K_2 & 0
%\end{bmatrix}$.
\begin{align}
&\begin{bmatrix}
Q-\bar{E}-(1+\xi)\frac{h^2}{\beta}D& P \breve{B} & P\bar{B}\\ 
 \breve{B}^T P & (\xi-1)\frac{h^2}{\beta}D-\bar{F} &0\\
\bar{B}^TP & 0 & L
\end{bmatrix} \nonumber \\
& \qquad \qquad \qquad \qquad \qquad =\Theta>0 \label{delay value 1}
\end{align}

where, $\bar{E}=\beta P {B}_1( {A}_1D^{-1}{A}_1^T+{B}_1D^{-1}{B}_1^{T}+D^{-1}+\bar{B}D^{-1}\bar{B}^{T} ){B}_1^{T}P$, $\bar{F}=(\frac{h^2}{\beta}D+L) \varsigma \int_{-\varsigma}^{0}A_d^2(\psi)d\psi$, $L>0$, $A_d(\psi)=\Omega _{1}(\varsigma ,\psi)$, $\bar{B}=B\begin{bmatrix}
 K_2 & 0
\end{bmatrix}$, $\breve{B}=B\begin{bmatrix}
 0 & K_2
\end{bmatrix}$.
\end{theorem}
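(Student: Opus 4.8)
The plan is to reproduce the Lyapunov--Krasovskii argument of Theorem~1, the one genuinely new ingredient being the error committed when the true velocity $\dot{e}_1$ is replaced by the estimate $\dot{\hat{e}}_1$. First I would derive the F-TDC error dynamics. Substituting (\ref{u hat}) and (\ref{h hat new}) into (\ref{sys new}) and writing the velocity estimation error as $\dot{\hat{e}}_1-\dot{e}_1=\dot{\hat{q}}-\dot{q}=:\tilde{v}$, the closed loop differs from (\ref{new err dyn 1}) only through the term $-BK_2\tilde{v}_h$ in the acceleration channel; the acceleration estimation error $\ddot{\hat{q}}_h-\ddot{q}_h$ entering through $\hat{N}$ can be absorbed into the bounded uncertainty $\sigma_1$ exactly as the time-delayed error was in Section~\ref{sec: 2.1}. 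Invoking Lemma~1 with $j=1$ (kernel $A_d=\Omega_1$) and its exactness on the desired polynomial trajectory, I would write $\dot{\hat{e}}_1(t)=\int_{-\varsigma}^{0}A_d(\psi)e_1(t+\psi)\,d\psi$, so that $BK_2\tilde{v}$ decomposes into an instantaneous part $\breve{B}e$ (with $\breve{B}=B[\,0\ K_2\,]$) and a distributed part $\int_{-\varsigma}^{0}A_d(\psi)\bar{B}e(t+\psi)\,d\psi$ (with $\bar{B}=B[\,K_2\ 0\,]$). With $e_h=e-\int_{-h}^{0}\dot{e}(t+\theta)\,d\theta$ this yields $\dot{e}=Ae-B_1\int_{-h}^{0}\dot{e}(t+\theta)\,d\theta+\breve{B}e_h-\int_{-\varsigma}^{0}A_d(\psi)\bar{B}e(t-h+\psi)\,d\psi+B\sigma_1$, where crucially $A=A_1+B_1$ remains Hurwitz and still solves $A^TP+PA=-Q$.

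Next I would carry the functional $V=V_1+V_2+V_3+V_4$ of (\ref{lyapunov}) over verbatim, augmented by one Krasovskii term on the product window $[-h,0]\times[-\varsigma,0]$ scaled by $L$, and differentiate along the new dynamics. Three families of extra terms appear relative to Theorem~1: the cross term $2e^TP\breve{B}e_h$ supplies the off-diagonal block $P\breve{B}$ (its sign fixed by the definition of $\breve{B}$); the direct distributed term $-2e^TP\bar{B}\!\int_{-\varsigma}^{0}A_d(\psi)e(t-h+\psi)\,d\psi$ supplies the off-diagonal block $P\bar{B}$, with $\zeta:=\int_{-\varsigma}^{0}A_d(\psi)e(t-h+\psi)\,d\psi$ playing the role of the third coordinate of $\bar{e}$; and the same distributed term reappearing inside the expansion of $-2e^TPB_1\int_{-h}^{0}\dot{e}(t+\theta)\,d\theta$ of (\ref{relation 1}), once split by (\ref{ineq 2}), produces the additional summand $\bar{B}D^{-1}\bar{B}^{T}$ inside the $B_1(\cdot)B_1^{T}$ sandwich of $\bar{E}$.

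The heart of the proof is the residual double integral $\tfrac1\beta\int_{-h}^{0}\bigl(\int_{-\varsigma}^{0}A_d(\psi)e(t-h+\theta+\psi)\,d\psi\bigr)^{T}D(\cdot)\,d\theta$ left by (\ref{ineq 2}). Here I would apply the new double-integral Jensen inequality (\ref{lemma 3}) of Lemma~2 to lower-bound the augmented Krasovskii term, pulling the kernel out as the scalar $\varsigma\int_{-\varsigma}^{0}A_d^{2}(\psi)\,d\psi$; together with the contribution scaled by $\tfrac{h^2}{\beta}D$ this is precisely what assembles the $(2,2)$ correction $\bar{F}=(\tfrac{h^2}{\beta}D+L)\varsigma\int_{-\varsigma}^{0}A_d^{2}(\psi)\,d\psi$, while the slack $L>0$ fills the $(3,3)$ block so that the coordinate $\zeta$ is dominated. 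Collecting everything gives $\dot{V}\le-\bar{e}^T\Theta\bar{e}+\Gamma+2\hat{s}^T\sigma_1$ with $\bar{e}=[\,e^T\ e_h^T\ \zeta^T\,]^T$, $\hat{s}=B^TPe$, $\Gamma$ the analogue of $\Gamma_1$ in (\ref{cond41}), and $\Theta$ as in (\ref{delay value 1}); once $\Theta>0$ the UUB conclusion, ultimate bound, and finite reaching time follow verbatim from the closing paragraph of Theorem~1's proof, with $\lambda_{min}(\Psi)$ replaced by $\lambda_{min}(\Theta)$.

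I expect the principal obstacle to be the bookkeeping that forces the estimation terms into exactly the claimed block pattern: routing one copy of the $\bar{B}$-term into the $B_1$-sandwich of $\bar{E}$ and the other into the $(1,3)$ block $P\bar{B}$, and choosing the augmented Krasovskii weight and $L$ so that Lemma~2 returns precisely the factor $\varsigma\int_{-\varsigma}^{0}A_d^{2}$ in $\bar{F}$ rather than a looser bound. A secondary but essential subtlety is that $\breve{B}e_h$ must be retained as an explicit cross term rather than folded into the drift: folding it would change the system matrix to $A_1+B_1+\breve{B}=\bigl[\begin{smallmatrix}0&I\\-K_1&0\end{smallmatrix}\bigr]$, which is only marginally stable, so the Lyapunov equation $A^TP+PA=-Q$ with $Q>0$ would then admit no positive-definite solution $P$.
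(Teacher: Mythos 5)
Your plan follows the paper's own route essentially step for step: the same decomposition of the velocity-estimation error into a $\breve{B}e_h$ cross term plus the distributed term $\bar{B}\int_{-\varsigma}^{0}A_d(\psi)e(t-h+\psi)\,d\psi$ (this is exactly the error dynamics used in the proof of Theorem 3), the same augmented coordinate $e_f=[\,e^T\ e_h^T\ \zeta^T\,]^T$, the same use of Lemma 2 to assemble $\bar{F}$ and the $(3,3)$ block $L$, and the same closing UUB argument inherited verbatim from Theorem 1 with $\lambda_{min}(\Theta)$ in place of $\lambda_{min}(\Psi)$. The only discrepancy is bookkeeping you already flag as the remaining work: the paper augments $V$ with three Krasovskii functionals $V_{f1},V_{f2},V_{f3}$ (two triple integrals weighted by $\tfrac{h\varsigma}{\beta}DA_d^2$ and one double integral weighted by $LA_d^2$) rather than the single product-window term you describe, since both the $\tfrac{h^2}{\beta}D$ and the $L$ contributions to $\bar{F}$ must be generated separately.
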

\begin{proof}
The proof is provided in Appendix B.
\end{proof}

\subsection{Adaptive-Robust Control: Related Work}
%\subsubsection*{Problem Formulation}
It can be observed that TDE degrades tracking performance of both the TDC and F-TDC in the face of uncertainties. The control methods that attempt to counter uncertainties, as reported in \cite{Ref:2}, \cite{Ref:10}, \cite{Ref:12}, \cite{Ref:13}-\cite{Ref:19}, requires predefined bound on the uncertainties which is not always possible in practical circumstances. To circumvent this situation Adaptive Sliding Mode Control (ASMC) was proposed in \cite{Ref:21}-\cite{Ref:22}. The control input of ASMC is given by,
\begin{equation}\label{input asmc}
\tau=\Sigma_n^{-1}(-\kappa_n+\Delta u_s),
\end{equation}
where, $\Sigma_n$ and $\kappa_n$ is the nominal values of $\Sigma$ and $\kappa$, and $\Delta u_s$ is the switching control input. For a choice of sliding surface $\bar{s}$, $\Sigma$ and $\kappa$ is defined as follows:
\begin{equation}\label{s asmc}
\dot{\bar{s}}=\Sigma+\kappa\Delta u_s,
\end{equation}
%\begin{equation}\label{eq u}
%\hat{u}(t)=\ddot{q}^d(t)-\Omega \dot{e}_1-\hat{f}.
%\end{equation}
The switching control $\Delta u_s$ is calculated as
\begin{equation}\label{delta u asmc}
 \Delta u_s=-\hat{c}\frac{\bar{s}}{||\bar{s}||}
%  \begin{cases}
%    -\hat{c}(e,t)\frac{\bar{s}}{|| \bar{s} ||}       & \quad \text{if } ||\bar{s} ||\geq \epsilon,\\
%    -\hat{c}(e,t)\frac{\bar{s}}{\epsilon}        & \quad \text{if } || \bar{s} || < \epsilon,\\
%  \end{cases}
\end{equation}
\begin{equation}\label{ASMC}
 \dot{\hat{c}} =
  \begin{cases}
   \bar{c}|| \bar{s}|| sgn(|| \bar{s}||-\rho)      & \quad \hat{c}>\gamma,\\
    \qquad \qquad \gamma        & \quad \hat{c}\leq\gamma,\\
  \end{cases}
\end{equation}
%$\hat{f}$ is the nominal value of $f=-\hat{M}^{-1}((M-\hat{M})\ddot{q}+H), \hat{b}=\hat{M}^{-1}$, $\Omega$ is a positive definite matrix,
where, $\hat{c}$ is the switching gain, $\bar{c}>0$ is a scalar adaptive gain, $\rho>0$ is a threshold value, $\gamma >0$ is small scalar to always keep $\hat{c}$ positive. Evaluation of $\rho$ can be done in two ways \cite{Ref:21}:
\begin{gather}
\rho=\varrho \quad \text{or},\label{ASMC 1}\\
 \rho(t)=4\hat{c}(t)t_s,\label{ASMC 2}
\end{gather}
where, $\varrho>0$ is a scalar, $t_s$ is the sampling interval. The choice (\ref{ASMC 1}) requires predefined bound of uncertainties. It can be noted from (\ref{ASMC}) that even if $|| \bar{s}||$ decreases (resp. increases), unless it falls below (resp. goes above) $\rho$  switching gain does not decrease (resp. increase). This causes overestimation (resp. underestimation) of switching gain and controller accuracy is compromised. Again, improper and low choice of $\rho$ may lead to very high switching gain and consequent chattering. On the other hand, method (\ref{ASMC 2}) assumes that the nominal value of the uncertainties are always greater than the perturbations. This assumption may not hold due to the effect of unmodelled dynamics and thus,  necessitates rigorous nominal modelling of the uncertainties in $N$ to design the control law. Either of the two situations, i.e. bound estimation or uncertainty modelling, is not always feasible in practical circumstances and consequently compromises the adaptive nature of the controller.
\subsection{Time-Delayed Adaptive Robust Control}\label{sec: 2.3}
%\begin{prb}
%Considering the limitations of the controllers that aim at negotiating the uncertainties, as discussed earlier, it is required to devise a control law which would neither require the nominal model or any predefined bound of the uncertainties. Simultaneously, the controller also needs to aim at elimination the overestimation-underestimation problem of the switching gain.
%\end{prb}
Considering the limitations of the existing controllers that aim at negotiating the uncertainties, as discussed earlier, a novel adaptive-robust control law, named Time-Delayed Adaptive Robust Control (TARC) is proposed in this endeavour, which neither requires the nominal model nor any predefined bound of the uncertainties as well as eliminates the overestimation-underestimation problem of switching gain. The structure of the control input of TARC is similar to (\ref{input}) and $\hat{N}$ is also evaluated according to (\ref{h hat new}). However, the auxiliary control input $u$ is selected as below,
\begin{equation}\label{tarc input}
u = \hat{u}+\Delta u.
\end{equation}
$\hat{u}$ is the nominal control input and selected as similar to (\ref{u hat}). $\Delta u$ is the switching control law which is responsible for negotiating the TDE and it is defined as below,
\begin{align}
\Delta u=
  \begin{cases}
    -\alpha\hat{c}(e,t)\frac{s}{\parallel s \parallel}       & \quad \text{if } \parallel s \parallel\geq \epsilon,\\
    -\alpha\hat{c}(e,t)\frac{s}{\epsilon}        & \quad \text{if } \parallel s \parallel< \epsilon,\\
  \end{cases}\label{delta u}
\end{align}
where, $s=B^TP\begin{bmatrix}
e_1 & \dot{\hat{e}}_1
\end{bmatrix}^T$ and $\epsilon>0$ is a small scalar.
%\begin{equation}\label{s}
%s=B^TP\begin{bmatrix}
%e_1 & \dot{\hat{e}}_1
%\end{bmatrix}^T.
%\end{equation}
The following novel adaptive control law for evaluation of $\hat{c}$ is proposed:
\begin{equation}\label{ATRC}
 \dot{\hat{c}} =
  \begin{cases}
    \quad || s ||      & \quad \hat{c}>\gamma,f(e)>0\\
    -|| s ||              & \quad \hat{c}>\gamma,f(e) \leq 0 \\
    \quad  \gamma        & \quad \hat{c}\leq\gamma,\\
  \end{cases}
\end{equation}
%The present choice of $f(e)$  attempts to bring the error trajectories closer to $|| s || =0$ by increasing (resp. decreasing) the switching gain whenever $|| s ||$ moves away from (resp. closer to) $|| s || =0$.
where, $\alpha>0$ is a scalar adaptive gain  and $\epsilon>0$ represents a small scalar, $f(e)$ is a suitable function of error defined by the designer. Here, it is selected as $ f(e)=|| s(t)||-|| s_h  ||$. According to the adaptive law (\ref{ATRC}) and present choice of $f(e)$, $\hat{c}$ increases (resp. decreases) whenever error trajectories move away from (resp. close to ) $|| s ||=0$  The advantages of the proposed TARC can be summarized as follows:
\begin{itemize}
  \item TARC reduces complex system modelling effort as only the knowledge of $\hat{M}$ suffices the controller design since $N$ along with the uncertainties is approximated using the time-delayed logic as in (\ref{h hat new}). This in turn reduces the tedious modelling effort of complex nonlinear systems.
   %To illustrate the fact with an example, friction, slip, skid etc. for wheeled mobile robot and these terms can be treated as uncertainties.  
  \item Evaluation of switching gain does not require either of the nominal model or predefined bound of the uncertainties and also removes the overestimation-underestimation problem.
  \item State derivatives are not required to compute the control law explicitly, as they are evaluated from the past state information using (\ref{lemma 2}) and (\ref{omega}).
\end{itemize}

%\subsubsection*{Stability Analysis for TARC}
The stability of the system (\ref{sys new}) employing TARC is analysed in the sense of UUB as stated in Theorem 3.
\begin{assumption}\label{assum 3}
Let, $|| \sigma_1|| \leq c$. Here, $c$ is an unknown scalar quantity. Knowledge of $c$, however, is only required for stability analysis but not to compute control law. 
\end{assumption}
\begin{theorem}
The system (\ref{sys new}) employing (\ref{input}), (\ref{tarc input}) and having the adaptive law (\ref{ATRC}) is UUB, provided the selection of $K_1, K_2, h$ and $\varsigma$ holds condition (\ref{delay value 1}).
% for various cases as stated below:
%\par Case (i): $\left \| s \right \|>\left \| s_{\tau} \right \|,\quad \hat{c}>\gamma,\quad \left \| s \right \|\geq \epsilon$
%\par Case (ii): $\left \| s \right \|<\left \| s_{\tau} \right \|,\quad \hat{c}>\gamma,\quad \left \| s \right \|\geq \epsilon$
%\par Case (iii): $\hat{c}\leq \gamma,\quad \left \| s \right \|\geq \epsilon$
%\par Case (iv): $\left \| s \right \|>\left \| s_{\tau} \right \|,\quad \hat{c}>\gamma,\quad \left \| s \right \|< \epsilon $
%\par Case (v):  $\left \| s \right \|<\left \| s_{\tau} \right \|,\quad \hat{c}>\gamma,\quad \left \| s \right \|<\epsilon$
%\par Case (vi):  $\hat{c}\leq \gamma,\quad \left \| s \right \|< \epsilon$
\end{theorem}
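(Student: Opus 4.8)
The plan is to treat Theorem 3 as an adaptive-robust extension of the F-TDC result (Theorem 2), so that the already-established delay machinery is reused and only the switching control $\Delta u$ and the adaptive gain $\hat c$ require new treatment. First I would substitute the control input (\ref{input}) with $u=\hat u+\Delta u$ from (\ref{tarc input}), together with the delayed estimate $\hat N$ of (\ref{h hat new}) and the nominal part $\hat u$ chosen as in (\ref{u hat}), into the system (\ref{sys new}). Since $\hat u$ and $\hat N$ coincide with the F-TDC choices, the resulting closed-loop error dynamics inherit the structure derived for Theorem 2 --- the Hurwitz term, the distributed-delay term $-B_1\int_{-h}^{0}\dot e(t+\theta)\,d\theta$, the state-derivative estimation terms handled through Lemmas 1 and 2, and the lumped uncertainty $B\sigma_1$ with $\|\sigma_1\|\le c$ by Assumption \ref{assum 3} --- plus a single additional forcing term generated by the switching control $\Delta u$.

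I would then use the Lyapunov--Krasovskii functional $V$ that produced the matrix $\Theta$ in condition (\ref{delay value 1}) for Theorem 2, augmented by an adaptation-error term $V_c=\tfrac12\tilde c^{\,2}$ with $\tilde c=\hat c-c^{*}$ and $c^{*}=c/\alpha$, the ideal gain for which the switching authority $\alpha c^{*}$ balances the uncertainty bound $c$. Differentiating, the F-TDC part reproduces the Theorem 2 estimate $\dot V\le-\bar e^{T}\Theta\bar e+\Gamma_1+2s^{T}\sigma_1$, where $\bar e$ is the augmented error vector of that analysis and $s=B^{T}P[\,e_1\ \ \dot{\hat e}_1\,]^{T}$. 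The switching control adds the cross term $2s^{T}\Delta u$, which on $\|s\|\ge\epsilon$ equals $-2\alpha\hat c\|s\|$ and inside the boundary layer equals $-2\alpha\hat c\|s\|^{2}/\epsilon\le0$; combined with $2s^{T}\sigma_1\le2c\|s\|$ this yields, on $\|s\|\ge\epsilon$, the bound $2s^{T}(\sigma_1+\Delta u)\le-2\alpha\|s\|\tilde c$, while inside the layer the residual $2c\epsilon$ feeds directly into the ultimate bound.

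The main obstacle is the non-standard adaptive law (\ref{ATRC}): because its sign flips with $f(e)=\|s\|-\|s_h\|$, the adaptation term $\tilde c\,\dot{\hat c}$ does not annihilate the switching term through a single algebraic identity as in conventional adaptive sliding-mode schemes, so I would argue by its three branches. Combining $-2\alpha\|s\|\tilde c$ with $\dot V_c=\tilde c\,\dot{\hat c}$ (which equals $\pm\tilde c\|s\|$ or $\gamma\tilde c$ across the branches) leaves sign-indefinite contributions that are linear in $\tilde c$ and $\|s\|$; the key auxiliary fact needed to control them is that $\hat c$ stays bounded. The lower bound $\hat c>\gamma>0$ is enforced by the third branch, and the upper bound follows from the self-regulating mechanism built into $f(e)$: whenever $\alpha\hat c$ exceeds $c$ the term $2s^{T}(\sigma_1+\Delta u)$ becomes negative and drives $\|s\|$ down, which forces $f(e)\le0$ and hence $\dot{\hat c}=-\|s\|<0$, precluding unbounded growth of $\hat c$. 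Establishing this boundedness cleanly --- ruling out escape through the oscillation between the growing- and shrinking-error branches --- is where I expect the real work to lie.

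Once $\hat c$ (hence $\tilde c$) is bounded by some $c_{\max}$, the sign-indefinite term is at most linear in $\|\bar e\|$ (since $\|s\|\le\|B^{T}P\|\,\|\bar e\|$), so I would collect everything into $\dot V\le-\lambda_{\min}(\Theta)\|\bar e\|^{2}+\Gamma_1+\kappa\|\bar e\|$ for a constant $\kappa$ depending on $c_{\max}$ and $\alpha$. With $\Theta>0$ guaranteed by (\ref{delay value 1}), this is negative whenever $\|\bar e\|$ exceeds an explicit threshold, and the concluding level-surface and finite-reaching-time argument of Theorem 1 then places the trajectories --- including the order-$\epsilon$ boundary-layer residual --- in an ultimately bounded set, establishing that the TARC closed loop is UUB under the same gain/delay condition (\ref{delay value 1}) as F-TDC.
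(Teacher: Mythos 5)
Your overall architecture --- reuse the F-TDC Lyapunov--Krasovskii functional that produces $\Theta$ in (\ref{delay value 1}), append a quadratic adaptation-error term, and argue branch-by-branch over the adaptive law (\ref{ATRC}) --- matches the paper's proof, which takes $V_r(e)=V_f(e)+(\hat c-c)^2$ with $c$ the bound from Assumption \ref{assum 3} (not $c/\alpha$) and splits into the cases $f(e)>0$, $f(e)\le 0$, and $\hat c\le\gamma$. The genuine gap is that you make boundedness of $\hat c$ the load-bearing step and defer it to a heuristic ``self-regulating mechanism'' argument that you yourself concede is the real work; ruling out escape of $\hat c$ through oscillation between the growing and shrinking branches is precisely the claim that would need a proof, and you do not supply one. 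The paper never proves $\hat c$ bounded. Instead, in each branch it combines the switching cross term (roughly $-2\alpha\hat c\|s\|$) with the adaptation derivative $2(\hat c-c)\dot{\hat c}=\pm 2(\hat c-c)\|s\|$ and observes that for $\alpha>1$ the coefficient of $\hat c\|s\|$ is negative in every case ($-2(\alpha-1)\hat c$ in Case (i), $-2(\alpha+1)\hat c+4c$ in Case (ii)), so each case directly yields its own ultimate bound $\varpi_1,\varpi_2,\varpi_3$ without any a priori bound on $\hat c$. Your route requires an auxiliary boundedness lemma the paper's argument does not need, and as written that lemma is missing.

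A second, more technical omission: you treat the switching control as acting instantaneously, computing the cross term as $2s^{T}\Delta u$. But the plant (\ref{sys new}) is driven by $\tau_h$, so the closed-loop forcing is $\sigma=\Delta u_h+\sigma_1$, and the proof must introduce and bound the mismatch $\Upsilon=\Delta u_h-\Delta u$, which survives into all three ultimate bounds. Likewise, the Lyapunov derivative produces $2\hat s^{T}(\cdot)$ with $\hat s=B^{T}Pe$ built from the \emph{true} velocity error, whereas $\Delta u$ is aligned with the computable $s=B^{T}P\begin{bmatrix}e_1^T & \dot{\hat e}_1^T\end{bmatrix}^T$; the discrepancy $\Delta s=\hat s-s$ generates sign-indefinite terms of the form $2(\alpha\hat c+c+\|\Upsilon\|)\|\Delta s\|$ that must be dominated via $\|\Delta s\|\le\iota_3\|e_f\|$. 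Neither term appears in your estimate, and both change the form of the ultimate bound.
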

\begin{proof}:
Let us define the Lyapunov functional as,
\begin{align}
V_r(e)&=V_f(e)+(\hat{c}-c)^2,\label{tarc lya}
\end{align}
where, $V_f(e)$ is defined in (\ref{ftdc lya}). Again, putting (\ref{tarc input}) in (\ref{sys new}) the  the error dynamics becomes,
\begin{equation}\label{error dyn state new}
%\dot{e}=Ae-\bar{B}\int_{-\varsigma}^{0}A_d(\psi )e(t+\psi )d\psi+B(\Delta u_a+\sigma_1).
\dot{e}={A}_1e+{B}_1e_h-\bar{B}\int_{-\varsigma}^{0}A_d(\psi )e(t-h+\psi )d\psi+B\sigma,
\end{equation}
where, $\sigma=\Delta u_h+\sigma_1$. Also following similar steps while proving Theorem 2 (provided in Appendix B) we have,
\begin{equation}\label{vf dot}
%\dot{e}=Ae-\bar{B}\int_{-\varsigma}^{0}A_d(\psi )e(t+\psi )d\psi+B(\Delta u_a+\sigma_2).
\dot{V}_f(e)\leq -e_f^T \Theta e_f + \Gamma +2\hat{s}^T(\Delta u+\sigma_1)+2\hat{s}^T \Upsilon,
\end{equation}
where, $ e_f$ is defined in Appendix B, $\Gamma \geq \frac{h}{\beta }\left \| \int_{-h}^{0}\left [ (B \sigma(t+\theta))^{T}\bar{D}^T\bar{D}B \sigma(t+\theta) \right ]d\theta  \right \|$ is a positive scalar, $\Upsilon=\Delta u_h-\Delta u$. Let us define the following,
\begin{equation}
\hat{s}=s+\Delta s \quad \text{where} \quad \Delta s=B^TP\begin{bmatrix}
0 & (\dot{e}_1-\dot{\hat{e}}_1)^T
\end{bmatrix}^T.
\end{equation}
Evaluating the structure of $s$ and $\Delta s$ one can find two positive scalars $\iota_2, \iota_3$ such that $||{s}||\leq \iota_2 ||e_f||, ||\Delta {s}||\leq \iota_3 ||e_f|| $.
Using (\ref{tarc lya}) the stability analysis for (\ref{sys new}) employing TARC is carried out for the following various cases.
\par \textbf{Case (i):} $f(e)>0,\quad \hat{c}>\gamma,\quad || s ||\geq \epsilon$. \\
Utilizing (\ref{delta u}), (\ref{ATRC}) and (\ref{vf dot}) we have,
\begin{align}
\dot{V}_r(e)&\leq -e_f^T \Theta e_f+\Gamma+2\hat{s}^T(-\alpha \hat{c} \frac{s}{|| s ||}+\sigma_1)+2\hat{s}^T \Upsilon+2(\hat{c}-c)||s||\nonumber\\
& =-e_f^T \Theta e_f+\Gamma-2\alpha\hat{c}\frac{s^Ts}{||s||}-2\alpha\hat{c}\frac{\Delta s^Ts}{||s||}+2\hat{s}^T \sigma_1 +2\hat{s}^T \Upsilon +2(\hat{c}-c)||s||\nonumber\\
& \leq -\lambda_{min}(\Theta)||e_f||^2-2(\alpha-1)\hat{c}||s||+\Gamma+2||s||||\Upsilon|| +2(\alpha\hat{c} +c+||\Upsilon||)||\Delta s||\label{case 1}
\end{align}
So, for $\alpha>1$, $\dot{V}_r(e)<0$ would be established if $\lambda_{min}(\Theta)||e_f||^2>\Gamma+2||s||||\Upsilon||+2(\alpha\hat{c} +c+||\Upsilon||)||\Delta s||$. Thus, using the relation $||{s}||\leq \iota_2 ||e_f||, ||\Delta {s}||\leq \iota_3 ||e_f|| $, the system would be UUB with the following ultimate bound
\begin{equation}\label{error bound 1}
%||e_f||=\sqrt{\frac{\Gamma+2||\hat{s}||||\Upsilon||+2(\alpha\hat{c} +c)||\Delta s||}{\lambda_{min}(\Theta)}}=\varpi_1
%||e_f||=\sqrt{\frac{\varpi_1}{\lambda_{min}(\Theta)}}.
||e_f||=\mu_1+\sqrt{\frac{\Gamma}{\lambda _{min}(\Theta)}+\mu_1^2}=\varpi_1.
\end{equation}
where, $\mu_1=\frac{\iota_2 || \Upsilon ||+\iota_3(\alpha\hat{c} +c+||\Upsilon||)}{\lambda _{min}(\Theta)}$.
%The finite time required to reach $\varpi_1$ is given by,
%\begin{equation*}
%t_{r1}-t_0\leq (||e_f(t_0)||-\varpi_1)/c_1 \quad \text{where} \quad \dot{V}_r(e)\leq-c_1.
%\end{equation*}
\par \textbf{Case (ii):} $f(e) \leq 0,\quad \hat{c}>\gamma,\quad \left \| s \right \|\geq \epsilon$. \\
Again, utilizing (\ref{ATRC}) for Case (ii),
\begin{align}
\dot{V}_r(e)&\leq -e_f^T \Theta e_f+\Gamma+2\hat{s}^T(-\alpha \hat{c} \frac{s}{|| s ||}+\sigma_1)+2\hat{s}^T \Upsilon -2(\hat{c}-c)||s||\nonumber\\
%& =-e_f^T \Theta e_f-2\alpha\hat{c}\frac{s^Ts}{||s||}-2\alpha\hat{c}\frac{\Delta s^Ts}{||s||}+2\hat{s}^T \sigma_1\nonumber\\
%& \qquad \qquad \qquad \qquad \qquad \qquad \qquad \qquad -2(\hat{c}-c)||s||\nonumber\\
& \leq -\lambda_{min}(\Theta)||e_f||^2+( 4c-2(\alpha+1)\hat{c}+2||\Upsilon|| )||s|| +\Gamma +2(\alpha\hat{c} +c+||\Upsilon||)||\Delta s||.\label{case 2}
\end{align}
$\dot{V}_r(e)<0$ would be achieved if $\lambda_{min}(\Theta)||e_f||^2> \Gamma +( 4c-2(\alpha+1)\hat{c}+2||\Upsilon|| )||s||+2(\alpha\hat{c} +c+||\Upsilon||)||\Delta s||$ and system would be UUB having following ultimate bound,
\begin{equation}\label{error bound 2}
%||e_f||=\sqrt{\frac{\Gamma +( 4c-2(\alpha+1)\hat{c} )||s||+2||\hat{s}||||\Upsilon||+2(\alpha\hat{c} +c)||\Delta s||}{\lambda_{min}(\Theta)}}=\varpi_2
%||e_f||=\sqrt{\frac{\varpi_2}{\lambda_{min}(\Theta)}}
||e_f||=\mu_2+\sqrt{\frac{\Gamma}{\lambda _{min}(\Theta)}+\mu_2^2}=\varpi_2.
\end{equation}
where, $\mu_2=\frac{\iota_2 ( 2c-(\alpha+1)\hat{c}+||\Upsilon|| )+\iota_3(\alpha\hat{c} +c+||\Upsilon||)}{\lambda _{min}(\Theta)}$.
%The finite time to reach $\varpi_2$ is obtained to be $t_{r2}\leq t_0+ (||e_f(t_0)||-\varpi_2)/c_1$.

\par \textbf{Case (iii):} $\hat{c}\leq \gamma,\quad \left \| s \right \|\geq \epsilon$.\\
Since $\hat{c}\leq \gamma$ we have $(\hat{c}-c )\gamma \leq \gamma^2-c\gamma \leq \gamma^2$. Using the adaptive law (\ref{ATRC}), for Case (iii) we have,
\begin{align}
\dot{V}_r(e)& \leq -e_f^T \Theta e_f+\Gamma+2\hat{s}^T(-\alpha \hat{c} \frac{s}{|| s ||}+\sigma_1)+2(\hat{c}-c)\gamma +2\hat{s}^T \Upsilon\nonumber\\
%& \leq -e_f^T \Theta e_f-2\alpha\hat{c}\frac{s^Ts}{||s||}-2\alpha\hat{c}\frac{\Delta s^Ts}{||s||}+2\hat{s}^T \sigma_1+2\gamma^2\nonumber\\
%& \qquad \qquad \qquad \qquad \qquad \qquad \qquad \qquad -2|\hat{c}-c|\gamma\nonumber\\
& \leq -\lambda_{min}(\Theta)||e_f||^2+\Gamma+2( c-\alpha\hat{c}+||\Upsilon||)||s|| +2(\alpha\hat{c} +c+||\Upsilon||)||\Delta s||+2\gamma^2.\label{case 3}
\end{align}
Similarly, as argued earlier the system would be UUB with the following ultimate bound,
\begin{equation}\label{error bound 3}
%||e_f||=\sqrt{\frac{\varpi_3}{\lambda_{min}(\Theta)}}
||e_f||=\mu_3+\sqrt{\frac{(\Gamma+2\gamma^2)}{\lambda _{min}(\Theta)}+\mu_3^2}=\varpi_3.
\end{equation}
where, $\mu_3=\frac{\iota_2 ( c-\alpha\hat{c}+||\Upsilon|| )+\iota_3(\alpha\hat{c} +c+||\Upsilon||)}{\lambda _{min}(\Theta)}$.

\end{proof}
\begin{remark}
The performance of TARC can be characterized by the various error bounds under various conditions. It can be noticed that low value of $h$ and high value of $\alpha$ would result in better accuracy. However, too large $\alpha$ may result in high control input. Also, one may choose different values of $\alpha$ for $|| s || > || s_h||$ and $|| s || \leq || s_h||$. Moreover, it is to be noticed that the stability notion of TARC is invariant to the choice of $f(e)$ and thus provides the designer the flexibility to select a suitable $f(e)$ according to the application requirement.   
\end{remark}

\section{Conclusion}
Selection of the controller gain and sampling interval is crucial for the performance of TDC and this design issue is addressed in this paper through a new stability approach. A bound on the delay is derived to select a suitable sampling interval. A new control approach, F-TDC is devised where the state derivatives are estimated from the previous state information. Moreover, a novel adaptive-robust control law, TARC has been proposed for a class of uncertain nonlinear systems subjected to unknown uncertainties. The proposed controller approximates unknown dynamics through time-delayed law and negotiates the approximation error, that surfaces due to the time-delayed approximation of uncertainties and state derivatives, by switching logic. The adaptive law eliminates the overestimation-underestimation problem for online evaluation of switching gain without any prior knowledge of uncertainties. Experimentation with a WMR shows improved path tracking performance of TARC compared to TDC and conventional ASMC. The proposed framework can also be extended for other systems such as Autonomous Underwater Vehicle, Unmanned Aerial Vehicle, Robotic manipulator etc. 
%As a summarizing note, it would be interesting to use the past state values to compute current states and analyse the effect of the same on controller performance. This is left as a future course of research. %  pcomputation of current sate value from its delayed past values As a future work it is intended to carry out analysis where past state values would also be considered to compute current satestates are derived from%, %consideration of similar approach for states also.  Here, explicit effect of noise on state information is noise is not considered and %significant improvement of f-TDC over TDC in terms of performance paves the way for future research where past state values would also be considered along with the current state values to curb effect of noise.

% if have a single appendix:
\appendices
\section{Proof of Theorem 2} 
Let us define the Lyapunov functional as,
\begin{align}
V_f(e)&=V(e)+V_{f1}(e)+V_{f2}(e)+V_{f3}(e),\label{ftdc lya}\\
V_{f1}(e)&=\frac{h \varsigma}{\beta}\int_{-h}^{0}\int_{-\varsigma}^{0}\int_{t-h+\psi}^{t-h}e^T(\eta +\theta)D\times \nonumber\\
& \qquad \qquad \qquad \qquad \qquad \times A_d^2 (\psi)e(\eta+\theta)d\eta d\psi d\theta\nonumber\\
V_{f2}(e)&=\frac{h \varsigma}{\beta}\int_{-h}^{0}\int_{-\varsigma}^{0}\int_{t+\theta}^{t}e^T(\eta-h)D \times \nonumber\\
& \qquad \qquad \qquad \qquad \qquad \times A_d^2(\psi)e(\eta-h)d\eta d\psi d\theta\nonumber\\
V_{f3}(e)&=\varsigma \int_{-\varsigma}^{0}\int_{t+\psi}^{t}e^T(\eta-h)(\psi)RA_d^2(\psi)e(\eta-h)d\eta d\psi.\nonumber
\end{align}
\ifCLASSOPTIONcaptionsoff
  \newpage
\fi

\end{document}